\documentclass[preprint,12pt]{elsarticle}

\usepackage[T1]{fontenc}
\IfFileExists{lmodern.sty}{\usepackage{lmodern}}{}
\usepackage{amsmath,amssymb,amsthm}
\usepackage{booktabs}
\usepackage{enumitem}
\usepackage[hidelinks]{hyperref}

\newtheorem{theorem}{Theorem}
\newtheorem{corollary}[theorem]{Corollary}
\newtheorem{proposition}[theorem]{Proposition}
\newtheorem{lemma}[theorem]{Lemma}
\theoremstyle{definition}
\newtheorem{remark}[theorem]{Remark}

\newcommand{\OPT}{\mathrm{OPT}}
\newcommand{\greedy}{\mathrm{greedy}}
\newcommand{\Sbar}{\overline{S}}

\journal{Operations Research Letters}

\begin{document}

\begin{frontmatter}

\title{When is greedy exact for weighted minimum 3-cut?}

\author{Armando Rodado Amaris}
\ead{armando@rodadolab.com}
\affiliation{organization={Centre for Quantum Software and Information, University of Technology Sydney},
  city={Sydney},
  state={NSW},
  country={Australia}}

\begin{abstract}
The greedy algorithm for minimum 3-cut splits a global minimum-weight cut and then the cheaper of its
two sides; it is a tight $4/3$-approximation. We show that it is exact, from every global
minimum-weight cut, whenever some optimal 3-partition has two parts joined by total weight at least
the weighted edge connectivity $\lambda$. This extends both theorems of Hochbaum and Shmoys from
unweighted planar graphs to all weighted graphs, recovers the $4/3$ bound, and generalizes to minimum
$k$-cut with a one-paragraph proof. The remaining hard instances have all pair weights below
$\lambda$; for them we prove a merge lemma for Gomory--Hu sides. Finally, we give a 7-vertex graph
with a unique Gomory--Hu tree whose unique optimal 3-partition contains no side of that tree, so the
first stage of Kapoor's algorithm does not suffice on its own. The main theorem, the $4/3$ bound,
the merge lemmas and the counterexample are verified in the Lean~4 proof assistant.
\end{abstract}

\begin{keyword}
minimum $k$-cut \sep greedy algorithm \sep Gomory--Hu tree \sep graph partitioning \sep
submodularity \sep formal verification
\end{keyword}

\end{frontmatter}

\section{Introduction}

Given an undirected graph $G=(V,E)$ with positive edge weights, a \emph{3-cut} is a set of edges whose
removal leaves at least three components; equivalently, the set of edges crossing a partition of $V$
into three nonempty parts. The minimum 3-cut problem asks for one of least total weight. Throughout,
``minimum'' refers to total weight, not to the number of edges (see Section~\ref{sec:prelim}).

The problem is polynomial for fixed $k$~\cite{GoldschmidtHochbaum1994}. Exact algorithms for $k=3$ have
successively used $O(n^3)$ max-flow computations~\cite{Kapoor1996,KamidoiWY2002}, an $O(mn^3)$
enumeration of light cuts~\cite{BurletGoldschmidt1997}, and $O(n)$ candidate cuts drawn from an ordered
cut enumeration~\cite{NagamochiIbaraki2000}. Vaughn~\cite{Vaughn2026} recently obtained
$O(n^2\log^2 n)$ randomized and $\tilde O(n^2)$ deterministic time; under the Max-Weight Clique
conjecture this is essentially optimal in general weighted graphs.

The simplest heuristic is the \emph{greedy algorithm}: take a global minimum-weight cut $(S,V\setminus S)$,
then a minimum-weight cut of whichever side gives the lighter result. For weighted graphs it is a tight
$4/3$-approximation~\cite{HochbaumTsai1983}, as reported in~\cite{HochbaumShmoys1985}. For unweighted
planar graphs, Hochbaum and Shmoys~\cite{HochbaumShmoys1985} proved, in the language of dual cycles, that it
is exact in two structural situations. This note asks when the greedy algorithm is exact in general.

\paragraph{Contributions.}
\begin{enumerate}[leftmargin=*]
\item \emph{Exactness criterion} (Theorem~\ref{thm:main}). For every 3-partition $(A,B,C)$ and every
  global minimum-weight cut $S$, $\greedy(S)\le\max(\lambda+d(A),\,W(A,B,C))$. Consequently greedy is
  exact whenever some optimum has a pair of parts joined by weight at least $\lambda$. This holds in
  every weighted graph, recovers both Hochbaum--Shmoys theorems (Corollary~\ref{cor:hs}) and the $4/3$
  bound (Corollary~\ref{cor:43}), and gives near-linear exact algorithms outside a well-defined hard
  class on general, planar and bounded-genus graphs (Corollary~\ref{cor:time}). Theorem~\ref{thm:main},
  Corollaries~\ref{cor:hs} and~\ref{cor:43}, Proposition~\ref{prop:hard} and the merge lemmas are
  formalized in Lean~4.
\item \emph{Hard instances} (Section~\ref{sec:hard}). When the criterion fails, all three pair weights
  are below $\lambda$, forcing every part's cut weight into $[\lambda,2\lambda)$ and the optimum into
  $[3\lambda/2,3\lambda)$. A merge lemma shows that any Gomory--Hu side missing a vertex of each of the
  other two parts can be absorbed into a part of an optimum.
\item \emph{Gomory--Hu sides are not enough} (Section~\ref{sec:cex}). Kapoor's
  algorithm~\cite{Kapoor1996} first tries every side $X$ of a Gomory--Hu tree as a part, completing it by
  a minimum-weight cut of $G[V\setminus X]$, and then needs further recursion. We show this cannot be
  avoided: a 7-vertex graph has a unique Gomory--Hu tree and a unique minimum-weight 3-partition
  containing no Gomory--Hu side. All claims about this graph are verified in Lean~4.
\item \emph{Minimum $k$-cut} (Section~\ref{sec:kcut}). A $k$-partition is \emph{compatible} with a
  cut $S$ if each of its parts lies inside $S$ or inside $V\setminus S$. For every $k\ge3$, every
  global minimum-weight cut $S$ and every $k$-partition $P$ with parts $A_i,A_j$, the cheapest
  compatible $k$-partition weighs at most $\max(W(P)-w(A_i,A_j)+\lambda,\,W(P))$
  (Theorem~\ref{thm:kcut}). Hence every global minimum-weight cut is compatible with some
  minimum-weight $k$-partition as soon as some optimum has two parts joined by weight at least
  $\lambda$. Without this condition a global minimum-weight cut can be incompatible with every
  optimum, already for $k=3$, where this is exactly the failure of greedy. The theorem also gives a
  two-line proof of Theorem~\ref{thm:main}.
\end{enumerate}

\paragraph{Relation to prior arguments.}
Two cases in the proof of Theorem~\ref{thm:main}, a minimum-weight cut crossing the lightest part and
the lightest part contained in one side, overlap Lemmas~2 and~3 of Kamidoi, Wakabayashi and
Yoshida~\cite{KamidoiWY2002}. The case where Vaughn's algorithm finishes immediately (a part of weight
below $\OPT/2$) is a special case of our criterion. The merge lemma is related to Xiao's non-crossing
lemmas for 3-way cuts in hypergraphs~\cite{Xiao2008}.

\section{Preliminaries}\label{sec:prelim}

Throughout, $G=(V,E)$ is connected with positive edge weights $w$ and $n\ge 3$ vertices. For
$X\subseteq V$, $d(X)$ is the weight of the edges with exactly one end in $X$, and
$\lambda=\min\{d(X):\emptyset\ne X\subsetneq V\}$ is the \emph{weighted edge connectivity}, the least
total weight of a cut. For disjoint $P,Q$, $w(P,Q)$ is the weight of the edges joining them. For a
3-partition $(A,B,C)$ write $w_{AB}=w(A,B)$ and so on, and
\[
  W(A,B,C) \;=\; w_{AB}+w_{AC}+w_{BC} \;=\; \tfrac12\bigl(d(A)+d(B)+d(C)\bigr).
\]
$\OPT$ is the minimum of $W$ over 3-partitions; a minimum-weight 3-partition is an \emph{optimum}. For
$U\subseteq V$ with $|U|\ge2$, $\lambda(G[U])$ is the least weight of a bipartition of the induced
subgraph into two nonempty sets ($0$ if $G[U]$ is disconnected); set $\lambda(G[U])=\infty$ if $|U|=1$.

\paragraph{Terminology.} All graphs are weighted, and ``minimum'' always refers to total edge weight,
never to the number of edges. A \emph{global minimum-weight cut} is a set $S$ with $d(S)=\lambda$; in
general it need not have the fewest edges among all cuts, and a cut with the fewest edges need not
have weight $\lambda$. The two notions agree only for unit weights, which is the setting of
Hochbaum and Shmoys~\cite{HochbaumShmoys1985}. Likewise a minimum-weight $x$--$y$ cut is a cut
separating $x$ from $y$ of least total weight $\lambda(x,y)$, and a minimum-weight 3-cut is a
3-partition of least total crossing weight.

The cut function is symmetric and submodular, hence posimodular: for all $X,Y\subseteq V$,
\[
  d(X)+d(Y)\ge d(X\cap Y)+d(X\cup Y), \qquad d(X)+d(Y)\ge d(X\setminus Y)+d(Y\setminus X).
\]

For a global minimum-weight cut $S$ (so $d(S)=\lambda$), let
\[
  \greedy(S) \;=\; \lambda + \min\bigl(\lambda(G[S]),\,\lambda(G[V\setminus S])\bigr).
\]
Splitting one side of $S$ optimally gives a 3-partition of this weight, so $\greedy(S)\ge\OPT$; greedy
is \emph{exact} if $\greedy(S)=\OPT$.

A \emph{Gomory--Hu tree}~\cite{GomoryHu1961} $T$ is a weighted tree on $V$ such that for every tree
edge $e=(x,y)$ the side $F_e$ of $T-e$ containing $x$ satisfies $d(F_e)=w_T(e)=\lambda(x,y)$, the
minimum weight of a cut separating $x$ from $y$. We call $F_e$ and $V\setminus F_e$ the
\emph{Gomory--Hu sides} of $e$. Every $Y\subseteq F_e$ with $x\in Y$ separates $x$ from $y$, so
$d(Y)\ge d(F_e)$.

\section{Greedy exactness}

\begin{theorem}\label{thm:main}
For every 3-partition $(A,B,C)$ of $V$ and every global minimum-weight cut $S$,
\[
  \greedy(S)\;\le\;\max\bigl(\lambda+d(A),\;W(A,B,C)\bigr).
\]
In particular, if some optimum $(A,B,C)$ satisfies $w_{BC}\ge\lambda$, then $\greedy(S)=\OPT$ for
every global minimum-weight cut $S$.
\end{theorem}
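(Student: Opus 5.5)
The plan is to bound the second-stage cost directly. Since $\greedy(S)=\lambda+\min(\lambda(G[S]),\lambda(G[\Sbar]))$, where $\Sbar=V\setminus S$, it suffices to exhibit one side $U\in\{S,\Sbar\}$ and a bipartition of $G[U]$ into two nonempty sets whose crossing weight inside $G[U]$ is at most $d(A)$, or at most $W(A,B,C)-\lambda$. Note that $W(A,B,C)=d(A)+w_{BC}$, so the target is really $\lambda+d(A)+\max(0,\,w_{BC}-\lambda)$. No submodularity should be needed; the argument is a case analysis on how the part $A$ sits relative to $S$.

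\emph{Generic case: $A\ne S$ and $A\ne\Sbar$.} Since $A\ne\emptyset$, it meets some side $U\in\{S,\Sbar\}$. I would pick $U$ so that $U\setminus A\ne\emptyset$ as well.
\begin{itemize}
\item If $A$ crosses $S$, then at least one of $S\setminus A$, $\Sbar\setminus A$ is nonempty, because $B\cup C\neq\emptyset$; take that side.
\item If $A$ lies inside one side, that side strictly contains $A$, because $A$ equals neither side; take it.
\end{itemize}
Then $(U\cap A,\,U\setminus A)$ is a bipartition of $G[U]$ into nonempty sets. Its weight $w(U\cap A,U\setminus A)$ counts only edges leaving $A$, so it is at most $d(A)$. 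Hence $\greedy(S)\le\lambda+d(A)$.

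\emph{Boundary case: $A=S$ or $A=\Sbar$.} Here $d(A)=d(S)=\lambda$, and the other side is exactly $B\cup C$ with $B,C$ nonempty. Splitting it as $(B,C)$ costs $w_{BC}$, so $\greedy(S)\le\lambda+w_{BC}=d(A)+w_{BC}=W(A,B,C)$.

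In both cases $\greedy(S)\le\max(\lambda+d(A),W(A,B,C))$, which proves the inequality.

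For the consequence, take an optimum $(A,B,C)$ with $w_{BC}\ge\lambda$. Then $\lambda+d(A)\le w_{BC}+d(A)=W(A,B,C)=\OPT$, so the maximum equals $\OPT$ and $\greedy(S)\le\OPT$. Combined with $\greedy(S)\ge\OPT$ from Section~\ref{sec:prelim}, this gives equality for every global minimum-weight cut $S$.

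The only delicate point is making sure the chosen bipartition of a side has both parts nonempty. The case $A$ equal to a side is exactly where the bound $d(A)$ is unavailable, and it is also exactly where the term $W(A,B,C)$ is needed.
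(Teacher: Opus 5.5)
Your proof is correct, and it is more elementary than the paper's. The paper treats four cases: $A\subsetneq S$, $A=S$, $S\subsetneq A$, and $A$ crossing $S$ with all four regions nonempty. In the crossing case it uses posimodularity, $d(A)+d(S)\ge d(A\setminus S)+d(S\setminus A)$ together with $d(S\setminus A)\ge\lambda$, to get $d(A\setminus S)\le d(A)$. Only then does it bound the crossing weight of $(A\setminus S,\Sbar\setminus A)$ by $d(A\setminus S)$.

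You avoid this detour. Every edge between $A\setminus S$ and $\Sbar\setminus A$ leaves $A$, so that weight is at most $d(A)$ directly. This is the same monotonicity the paper uses in its case (iii). As a result, the paper's cases (i), (iii) and (iv) collapse into one condition: some side meets both $A$ and $V\setminus A$. The case $A\in\{S,\Sbar\}$ is then the only one where $W(A,B,C)$ is needed.

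Your argument is in effect the $k=3$ instance of the refine-and-coarsen proof of Theorem~\ref{thm:kcut} for the pair $(B,C)$, which the paper mentions as an alternative derivation. It uses no submodularity, only nonnegativity of the weights. The paper's route gives the stronger intermediate fact that $A\setminus S$ is itself a cut of $G$ of weight at most $d(A)$, matching Lemma~2 of~\cite{KamidoiWY2002}. The theorem does not need that extra strength.

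One wording point: your two bullets are exhaustive only if ``$A$ crosses $S$'' means that $A$ meets both $S$ and $\Sbar$. This is broader than the paper's usage, and it is what places the case $S\subsetneq A$ under your first bullet. Your justification there already works under that reading.
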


\begin{proof}
Write $\Sbar=V\setminus S$. It suffices to exhibit a bipartition of $S$ or of $\Sbar$ into two nonempty
sets whose crossing weight is at most $d(A)$; adding $\lambda=d(S)$ then bounds $\greedy(S)$ by
$\lambda+d(A)$. The one exception is case~(ii), which yields $W(A,B,C)$ directly. Since $A\ne V$,
exactly one of the following holds, up to exchanging $S$ and $\Sbar$.
\begin{enumerate}[label=(\roman*),leftmargin=*]
\item $A\subsetneq S$. Split $S$ into $A$ and $S\setminus A$; the crossing weight is at most $d(A)$.
\item $A=S$. Split $\Sbar$ into $B$ and $C$: $\greedy(S)\le\lambda+w_{BC}=d(A)+w_{BC}=W(A,B,C)$.
\item $S\subsetneq A$. Then $B\cup C\subseteq\Sbar$ and $A\setminus S\ne\emptyset$. Split $\Sbar$ into
  $A\setminus S$ and $B\cup C$; the crossing weight is
  $w(A\setminus S,B\cup C)\le w(A,B\cup C)=d(A)$.
\item $A$ and $S$ cross, i.e.\ $A\cap S$, $A\setminus S$, $S\setminus A$ and $V\setminus(A\cup S)$ are
  all nonempty. By posimodularity, $d(A)+d(S)\ge d(A\setminus S)+d(S\setminus A)$. Since
  $S\setminus A$ is a nonempty proper subset, $d(S\setminus A)\ge\lambda=d(S)$, so
  $d(A\setminus S)\le d(A)$. Split $\Sbar$ into $A\setminus S$ and $\Sbar\setminus A$, both nonempty;
  the crossing weight is at most $d(A\setminus S)\le d(A)$.
\end{enumerate}
For the second statement, $d(A)=w_{AB}+w_{AC}$, so $\lambda+d(A)\le W(A,B,C)$ exactly when
$\lambda\le w_{BC}$, and then $\greedy(S)\le\OPT\le\greedy(S)$.
\end{proof}

For the lightest part, cases (i) and (iv) essentially appear as Lemmas~3 and~2 of~\cite{KamidoiWY2002}.
The new ingredients are case~(iii) and the uniform pair-weight condition.
Theorem~\ref{thm:main} also follows from the $k$-cut argument of Section~\ref{sec:kcut} applied to
the pair $(B,C)$, since $W(A,B,C)-w_{BC}=d(A)$.

\begin{corollary}[weighted Hochbaum--Shmoys]\label{cor:hs}
Let $S$ be any global minimum-weight cut.
\begin{enumerate}[label=(\alph*),leftmargin=*]
\item If some optimum has two parts $A,B$ joined by no edge, then $\greedy(S)=\OPT$.
\item If some optimum has a pair of parts joined by weight at least $\lambda$, then $\greedy(S)=\OPT$.
\end{enumerate}
\end{corollary}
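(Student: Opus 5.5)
Both parts will follow from Theorem~\ref{thm:main} once the labelling of the optimum is chosen so that the heavy (or empty) pair plays the role of $(B,C)$ in the theorem. Part~(b) is essentially the second statement of Theorem~\ref{thm:main}. Let $(A,B,C)$ be an optimum with some pair of parts joined by weight at least $\lambda$. Because $W$ is symmetric in its three arguments, we may relabel so that this pair is $(B,C)$, that is, $w_{BC}\ge\lambda$. Then
\[
  \lambda+d(A)=\lambda+w_{AB}+w_{AC}\le w_{BC}+w_{AB}+w_{AC}=W(A,B,C)=\OPT ,
\]
so Theorem~\ref{thm:main} gives $\greedy(S)\le\OPT$. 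Since also $\greedy(S)\ge\OPT$, equality holds.

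For part~(a), let $(A,B,C)$ be an optimum with $w(A,B)=0$, and relabel so that the non-adjacent pair is $(A,B)$. We want to show that some pair of parts is joined by weight at least $\lambda$, which reduces~(a) to~(b). Since $w_{AB}=0$, we have $d(A)=w_{AC}$ and $d(B)=w_{BC}$. Because $A$ is a nonempty proper subset of $V$, $d(A)\ge\lambda$, so $w_{AC}\ge\lambda$. Part~(b) applied to the pair $(A,C)$ then finishes the proof.

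Connectivity of $G$ is not needed in this argument. The only facts used are $d(A)\ge\lambda$ for nonempty proper $A$ and the symmetry of $W$ under relabelling the parts.

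I expect no real obstacle. The one point to check is the bookkeeping in Theorem~\ref{thm:main}: the part called $A$ there must be the part outside the heavy pair. In~(a) the heavy pair is $(A,C)$, so the theorem is applied with $B$ in the role of $A$.
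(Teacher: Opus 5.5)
Your proposal is correct and matches the paper's proof: (b) is the second statement of Theorem~\ref{thm:main} after relabelling, and (a) reduces to it via $w_{AC}=d(A)\ge\lambda$ when $w_{AB}=0$, with the parts permuted so that $B$ plays the role of $A$.
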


\begin{proof}
In (a), $w_{AC}=d(A)\ge\lambda$, so Theorem~\ref{thm:main} applies with the parts permuted; (b) is
Theorem~\ref{thm:main}.
\end{proof}

For unweighted planar graphs, (a) and (b) are Theorems~1 and~2 of~\cite{HochbaumShmoys1985}, stated
there for dual cycles; a theta-shaped dual with a strand at least as long as the girth corresponds
to~(b).

\begin{corollary}[the $4/3$ bound]\label{cor:43}
$\greedy(S)\le\tfrac43\,\OPT$ for every global minimum-weight cut $S$.
\end{corollary}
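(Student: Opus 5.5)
Take an optimum $(A,B,C)$ and label its parts so that $w_{BC}$ is the largest of the three pair weights. Then apply Theorem~\ref{thm:main} to this optimum and show that both terms of the maximum are at most $\tfrac43\,\OPT$. The second term is immediate, since $W(A,B,C)=\OPT\le\tfrac43\OPT$.

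For the first term we must bound $\lambda+d(A)$. Because $w_{BC}$ is the largest pair weight, $w_{BC}\ge\OPT/3$. This gives
\[
  d(A)=w_{AB}+w_{AC}=\OPT-w_{BC}\le\tfrac23\,\OPT .
\]
It remains to show $\lambda\le\tfrac23\OPT$. Since $d(A)+d(B)+d(C)=2\,\OPT$, the lightest part has cut weight at most $\tfrac23\OPT$. That part is a nonempty proper subset of $V$, so $\lambda\le\tfrac23\OPT$.

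Adding the two bounds gives $\lambda+d(A)\le\tfrac43\OPT$, and hence $\greedy(S)\le\max(\tfrac43\OPT,\OPT)=\tfrac43\OPT$.

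No step is a real obstacle. The only care needed is to choose $A$ as the part opposite the heaviest pair. Choosing $A$ as the lightest part instead would give only $\lambda+d(A)\le 2d(A)\le\tfrac43\OPT$, which is equally fine. Either choice works, and I would use the lightest part for brevity: $\lambda\le d(A)\le\tfrac23\OPT$.
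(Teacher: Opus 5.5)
Your proof is correct and is essentially the paper's: choose $A$ opposite the heaviest pair, apply Theorem~\ref{thm:main}, and bound $\lambda$ by the lightest part's cut weight, which is at most $\tfrac23\OPT$. Your two ``choices'' of $A$ are in fact the same part: since $d(A)=\OPT-w_{BC}$, the part opposite the heaviest pair is exactly the lightest part, so the one-line version $\lambda\le d(A)\le\tfrac23\OPT$ is the same argument rather than an alternative.
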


\begin{proof}
Choose $A$ opposite the heaviest pair, so $w_{BC}\ge\OPT/3$. Theorem~\ref{thm:main} gives
$\greedy(S)\le\max(\OPT+\lambda-w_{BC},\,\OPT)$, and
$\lambda\le\min(d(A),d(B),d(C))\le\tfrac23\OPT$.
\end{proof}

This recovers the tight bound of~\cite{HochbaumTsai1983}.

\begin{corollary}[running time]\label{cor:time}
The greedy algorithm needs one global minimum-weight cut of $G$ and one of each side; the sides are
edge-disjoint and inherit planarity or an embedding of genus $g$. It returns an optimum whenever some
optimum satisfies the criterion of Theorem~\ref{thm:main}, within the following times.
\begin{center}
\begin{tabular}{lll}
\toprule
Graph class & Global minimum-weight cut subroutine & Time\\
\midrule
General weighted & Henzinger--Li--Rao--Wang~\cite{HenzingerLRW2024} (deterministic) & near-linear\\
Planar & {\L}{\k{a}}cki--Sankowski~\cite{LackiSankowski2011} & $O(n\log\log n)$\\
Genus $g$ & Chambers--Erickson--Fox--Nayyeri~\cite{ChambersEFN} & $g^{O(g)}\,n\log\log n$\\
\bottomrule
\end{tabular}
\end{center}
\end{corollary}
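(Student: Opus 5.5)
The plan is to assemble Corollary~\ref{cor:time} from two facts: an exactness part, which is already proved, and a running-time part, which follows once three structural properties of the greedy procedure are checked. For exactness nothing new is needed. If some optimum $(A,B,C)$ satisfies $w_{BC}\ge\lambda$ (after permuting parts), then Theorem~\ref{thm:main} gives $\greedy(S)=\OPT$ for \emph{every} global minimum-weight cut $S$. So the algorithm may use whichever global minimum-weight cut the subroutine returns. It then computes $\lambda(G[S])$ and $\lambda(G[\Sbar])$, with the conventions of Section~\ref{sec:prelim}: value $\infty$ for a singleton side and $0$ for a disconnected side, the latter detected by a linear-time connectivity check. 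It keeps the cheaper split, and the resulting 3-partition has weight $\greedy(S)=\OPT$.

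For the running time, I would check three things in order.
\begin{itemize}
\item The two induced subgraphs $G[S]$ and $G[\Sbar]$ have disjoint vertex sets, so their edge sets are disjoint. Hence $|E(G[S])|+|E(G[\Sbar])|\le m$ and $|S|+|\Sbar|=n$.
\item Induced subgraphs are closed under planarity. An embedding of genus $g$ restricts to an embedding of each induced subgraph of genus at most $g$, since deleting vertices and edges never increases genus.
\item Each running time in the table is superadditive, or at least monotone and additive up to constants, over pieces whose sizes sum to $n$ (respectively $m$). For near-linear bounds, $T(n_1)+T(n_2)\le T(n_1+n_2)$ holds because $T(x)/x$ is nondecreasing. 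The same holds for $n\log\log n$ and for $g^{O(g)}n\log\log n$ with $g$ fixed.
\end{itemize}

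Combining these, the total cost is three calls to the subroutine, one on $G$ and one on each side, plus linear overhead. This is bounded by a constant times the time of one call on $G$, which gives each row of the table.

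The step needing most care is the subroutines' output and input requirements. The cited algorithms must return an actual cut $S$ (a vertex set), not merely the value $\lambda$. The subroutines on the sides must also accept possibly disconnected inputs, or else be preceded by a component check, which handles the $0$ case. For the genus row, I would also confirm that the Chambers--Erickson--Fox--Nayyeri bound depends on $g$ only through the stated factor, so that passing to subgraphs of genus $\le g$ does not worsen it.

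Planarity and genus are unaffected by weights, so the weighted setting introduces no further issue. Recording the best of the two splits, and returning it as the 3-partition, costs only linear time.
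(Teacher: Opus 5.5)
Your proposal is correct and matches the paper's justification, which is built into the statement itself. Exactness for whatever global minimum-weight cut the subroutine returns follows from the ``for every $S$'' clause of Theorem~\ref{thm:main}, and the time bound follows from three subroutine calls on the edge-disjoint graphs $G$, $G[S]$ and $G[\Sbar]$, which inherit planarity or genus at most $g$. One small simplification: since $\lambda>0$, both sides of a global minimum-weight cut are connected (otherwise $d(S)=d(S_1)+d(S_2)\ge 2\lambda$), so your connectivity check for the value-$0$ case is never triggered.
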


\section{Hard instances}\label{sec:hard}

Call an optimum $(A,B,C)$ \emph{hard} if all three pair weights are below $\lambda$. By
Corollary~\ref{cor:hs}, greedy is exact unless every optimum is hard; so $\min(\greedy,H)$ is exact for
any procedure $H$ that always returns a valid 3-cut and is exact when all optima are hard.

\begin{proposition}\label{prop:hard}
Let $(A,B,C)$ be a hard optimum. Then
(1) $0<w_{XY}<\lambda$ for every pair of parts;
(2) $\lambda\le d(X)<2\lambda$ for every part $X$;
(3) $\tfrac32\lambda\le\OPT<3\lambda$.
\end{proposition}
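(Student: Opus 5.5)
The plan is to read everything off the identities $d(A)=w_{AB}+w_{AC}$ (and its cyclic versions), $W(A,B,C)=\tfrac12\bigl(d(A)+d(B)+d(C)\bigr)$, and the definition of $\lambda$. The one genuinely structural input is the optimality of $(A,B,C)$, which I will use for the positivity claim in (1).

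For (1), the upper bound $w_{XY}<\lambda$ is the definition of hardness. For positivity, suppose some pair, say $A,B$, has $w_{AB}=0$. Then $d(A)=w_{AC}$, and $d(A)\ge\lambda$ because $A$ is a nonempty proper subset of $V$. So $w_{AC}\ge\lambda$, contradicting hardness. This is exactly the argument of Corollary~\ref{cor:hs}(a), so I would cite it directly. (Connectivity alone does not force every pair weight to be positive, so this step is where optimality together with hardness is needed.)

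For (2), the lower bound $d(X)\ge\lambda$ holds because every part is a nonempty proper subset of $V$. For the upper bound, $d(X)$ is the sum of the two pair weights involving $X$, and each is $<\lambda$, so $d(X)<2\lambda$.

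For (3), sum over the three parts and halve: $\OPT=W(A,B,C)=\tfrac12\sum_X d(X)$. This lies in $[\tfrac32\lambda,3\lambda)$ by (2); equivalently, it is the sum of three pair weights, each $<\lambda$. Nothing here is an obstacle. The only point needing care is the positivity in (1), which goes through the Hochbaum--Shmoys case rather than connectivity.
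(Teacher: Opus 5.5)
Your proof is correct and follows the paper's argument: each $d(X)\ge\lambda$ is a sum of two pair weights below $\lambda$, which gives both positivity and $d(X)<2\lambda$, and (3) follows by summing. One correction to your commentary: the positivity step uses only hardness and $d(X)\ge\lambda$, not optimality, which enters only in (3) to identify $W(A,B,C)$ with $\OPT$; so what you are reusing is the inequality $w_{AC}=d(A)\ge\lambda$ from the proof of Corollary~\ref{cor:hs}(a), not that corollary's conclusion.
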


\begin{proof}
Each $d(X)\ge\lambda$ is a sum of two pair weights, each below $\lambda$, so both are positive and
$d(X)<2\lambda$. Summing, $2\,\OPT=\sum_X d(X)\ge3\lambda$, while $\OPT$, the sum of the three pair
weights, is below $3\lambda$.
\end{proof}

Every part of a hard optimum is thus a 2-approximate minimum-weight cut. In a plane graph, the dual of
a hard optimal cut is a theta-graph whose three strands are each lighter than the girth of the dual;
the other two dual shapes, two disjoint cycles or two cycles sharing a vertex, leave two parts with no
edges between them.

The next two lemmas enlarge a part of an optimum by a suitable minimum-weight cut without losing
optimality. Lemma~\ref{lem:merge-cut} is closely related to Xiao's Lemma~6~\cite{Xiao2008}, which gives
the corresponding non-crossing statement; the explicit merged optimum is what we use.

\begin{lemma}\label{lem:merge-cut}
Let $(A,B,C)$ be an optimum, $a\in A$, and $R\subseteq V\setminus\{a\}$ a set meeting both $B$ and $C$.
If $X$ is the $a$-side of any minimum-weight cut separating $a$ from $R$, then
$(A\cup X,\,B\setminus X,\,C\setminus X)$ is an optimum.
\end{lemma}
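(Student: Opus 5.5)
The plan is to compare the two 3-partitions part by part, showing that each part's cut weight does not increase:
\[
d(A\cup X)\le d(A),\qquad d(B\setminus X)\le d(B),\qquad d(C\setminus X)\le d(C).
\]
Summing and halving then gives $W(A\cup X,B\setminus X,C\setminus X)\le W(A,B,C)=\OPT$. Validity of the new 3-partition is immediate. Since $X$ is the $a$-side of a cut separating $a$ from $R$, we have $X\cap R=\emptyset$. Hence $B\setminus X\supseteq R\cap B\neq\emptyset$ and $C\setminus X\supseteq R\cap C\ne\emptyset$, while $A\cup X\ni a$. The three sets are clearly disjoint and cover $V$, so the new partition is an optimum.

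The single tool for all three inequalities is the minimality of $X$. Any set $Y$ with $a\in Y$ and $Y\cap R=\emptyset$ separates $a$ from $R$, so $d(Y)\ge d(X)$. This is the same kind of observation as the Gomory--Hu side remark in Section~\ref{sec:prelim}.

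For the first inequality, I would apply submodularity to $A$ and $X$:
\[
d(A)+d(X)\ge d(A\cap X)+d(A\cup X).
\]
The set $A\cap X$ contains $a$ and misses $R$, so $d(A\cap X)\ge d(X)$, and cancelling gives $d(A\cup X)\le d(A)$.

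For the second, I would apply posimodularity to $B$ and $X$:
\[
d(B)+d(X)\ge d(B\setminus X)+d(X\setminus B).
\]
Here $X\setminus B$ still contains $a$, because $a\in A$ is not in $B$, and it still misses $R$. So $d(X\setminus B)\ge d(X)$, yielding $d(B\setminus X)\le d(B)$. The argument for $C$ is identical.

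The only point needing care, which I expect to be the main (mild) obstacle, is checking that the comparison sets $A\cap X$ and $X\setminus B$ are genuinely admissible cuts separating $a$ from $R$. Each must be nonempty, since it contains $a$, and its complement must meet $R$, which holds since $R\ne\emptyset$ and $R\cap X=\emptyset$. This is exactly where the hypotheses $a\in A$ and $a\notin R$ are used. Beyond that, the proof is two applications of the submodular and posimodular inequalities from Section~\ref{sec:prelim}.
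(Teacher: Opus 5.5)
Your proposal is correct and follows the paper's proof essentially verbatim. You use the same submodular inequality for $A,X$ and the same posimodular inequalities for $B,X$ and $C,X$; minimality of $X$ bounds $d(A\cap X)$, $d(X\setminus B)$ and $d(X\setminus C)$ below by $d(X)$, and nonemptiness comes from $R$ meeting $B$ and $C$. The only cosmetic difference is that you cancel $d(X)$ in each inequality separately, whereas the paper sums all three first.
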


\begin{proof}
The sets $A\cap X$, $X\setminus B$ and $X\setminus C$ each contain $a$ and avoid $R$, so each has cut
weight at least $d(X)$. Submodularity and posimodularity give
\begin{align*}
d(A)+d(X)&\ge d(A\cup X)+d(A\cap X),\\
d(B)+d(X)&\ge d(B\setminus X)+d(X\setminus B),\\
d(C)+d(X)&\ge d(C\setminus X)+d(X\setminus C).
\end{align*}
Summing, $2\,\OPT+3d(X)\ge 2\,W(A\cup X,B\setminus X,C\setminus X)+3d(X)$. The new parts are nonempty
because $R$ meets $B$ and $C$.
\end{proof}

\begin{lemma}\label{lem:merge-gh}
Let $(A,B,C)$ be an optimum and $T$ a Gomory--Hu tree. Let $e=(s,t)$ be a tree edge with $s\in A$, and
let $X$ be its Gomory--Hu side containing $s$. If $X$ contains neither all of $B$ nor all of $C$, then
$(A\cup X,\,B\setminus X,\,C\setminus X)$ is an optimum.
\end{lemma}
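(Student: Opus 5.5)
We reduce Lemma~\ref{lem:merge-gh} to Lemma~\ref{lem:merge-cut} with $a=s$ and $R=V\setminus X$. The set $R$ avoids $s$, and it meets both $B$ and $C$ precisely because $X$ contains neither all of $B$ nor all of $C$. It remains to show that $X$ is the $s$-side of some minimum-weight cut separating $s$ from $R$. Once this is done, Lemma~\ref{lem:merge-cut} gives directly that $(A\cup X,\,B\setminus X,\,C\setminus X)$ is an optimum, and the nonemptiness of the new parts is inherited from that lemma.

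The key step is the minimality of $X$ among cuts separating $s$ from $V\setminus X$. Let $Y$ be any set with $s\in Y$ and $Y\cap(V\setminus X)=\emptyset$, so $Y\subseteq X$. By the Gomory--Hu property recorded in Section~\ref{sec:prelim}, every $Y\subseteq F_e$ with $s\in Y$ separates $s$ from $t$, hence $d(Y)\ge\lambda(s,t)=d(X)$. Here $X=F_e$ is the side of $T-e$ containing $s$, and $t\notin X$. Since $X$ itself separates $s$ from $R$, it attains the minimum. Thus $X$ is the $s$-side of a minimum-weight $s$--$R$ cut, and we may take exactly this cut in Lemma~\ref{lem:merge-cut}.

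The only point needing care is orientation. The tree edge is written $e=(s,t)$, but the lemma speaks of ``its Gomory--Hu side containing $s$''. If the definition's $F_e$ is the side containing the other endpoint, I would note that the Gomory--Hu condition is symmetric: $d(V\setminus F_e)=d(F_e)=\lambda(s,t)$. A subset of the $s$-side containing $s$ still separates $s$ from $t$, so the same inequality holds. I do not expect a genuine obstacle beyond this bookkeeping; the substance is entirely in Lemma~\ref{lem:merge-cut}.
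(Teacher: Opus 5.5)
Your proposal is correct and is essentially the paper's proof. Both rest on the observation that every $Y\subseteq X$ with $s\in Y$ separates $s$ from $t$ and so has $d(Y)\ge d(X)$; the only difference is that you apply the statement of Lemma~\ref{lem:merge-cut} with $a=s$ and $R=V\setminus X$, whereas the paper reuses that lemma's three inequalities directly, and your check that $R$ meets $B$ and $C$ is exactly the paper's nonemptiness hypothesis.
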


\begin{proof}
Every $Y\subseteq X$ with $s\in Y$ separates $s$ from $t$, so $d(Y)\ge d(X)$. The sets $A\cap X$,
$X\setminus B$ and $X\setminus C$ contain $s$ and lie in $X$, so the three inequalities of
Lemma~\ref{lem:merge-cut} apply. The new parts are nonempty by hypothesis.
\end{proof}

\begin{remark}
If such an $X$ has at least two vertices, contracting $X$ preserves $\OPT$, since some optimum keeps
$X$ inside one part, and $T$ with $X$ contracted is a Gomory--Hu tree of $G/X$. Any statement of the
form ``some optimum has a part equal to a Gomory--Hu side'' therefore reduces, by induction on $n$, to
optima admitting no such merge. Section~\ref{sec:cex} shows that this residual case genuinely occurs.
\end{remark}

\section{Gomory--Hu sides do not suffice}\label{sec:cex}

Given a Gomory--Hu tree $T$, Kapoor's algorithm~\cite{Kapoor1996} first evaluates
\[
  F(T) \;=\; \min_{X}\Bigl(d(X)+\lambda\bigl(G[V\setminus X]\bigr)\Bigr),
\]
where $X$ ranges over the Gomory--Hu sides of $T$. It then recursively shrinks sides until $T$ is a
star, and handles that residual case with $O(n^3)$ max-flow computations. $F(T)$ is always the weight
of a 3-cut, so $F(T)\ge\OPT$, with equality whenever some optimum has a part that is a Gomory--Hu side.
Equality also holds whenever greedy is exact, in particular under the criterion of
Theorem~\ref{thm:main}: the side $S$ of a lightest tree edge is a global minimum-weight cut, and
$F(T)\le\greedy(S)$. One might hope that equality always holds, which would make
the recursion unnecessary. It does not.

\begin{theorem}\label{thm:cex}
There is a weighted graph on 7 vertices with a unique Gomory--Hu tree $T$ and a unique optimum, such
that no part of the optimum is a side of $T$ and $F(T)>\OPT$.
\end{theorem}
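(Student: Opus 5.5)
The proof will be an explicit construction followed by finite verification. I will design a 7-vertex graph from the requirements and then check every claim by exhaustive computation over the finitely many cuts and 3-partitions, which is what the Lean formalization does.

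\textbf{Design constraints.} By Theorem~\ref{thm:main} and the remark after Lemma~\ref{lem:merge-gh}, the unique optimum $(A,B,C)$ must be hard: every pair weight is below $\lambda$. Greedy must fail from the global minimum-weight cut. Moreover, no Gomory--Hu side $X$ containing a vertex $s\in A$ may be mergeable. So each side lying inside $A$ must be a part itself or contain all of $B$ or all of $C$, and the same holds with the parts permuted.

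\textbf{Candidate construction.} I would take parts of sizes $2,2,3$, with heavy internal edges so that no part is split. Within each part I make the vertices adjacent to the outside distinct, so that every part $X$ has $d(X)$ strictly larger than some singleton or small set that cuts more cheaply. The Gomory--Hu tree then prefers sides that cut across the parts, for example a side consisting of one vertex of $A$ together with one vertex of $B$, joined by a moderate edge.

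A concrete first attempt places three ``hub'' vertices $a,b,c$, one per part, joined pairwise by light edges of weight $1$ (so the pair weights are $<\lambda$). I would then attach pendant-ish vertices so that every singleton has degree about $\lambda$, and tune the weights until the optimum is $(A,B,C)$ with $W$ close to $3\lambda/2$, while each $d(X)$ for $X\in\{A,B,C\}$ exceeds the corresponding tree-edge values. I would search small integer weights, possibly by computer, for an instance satisfying all conditions.

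\textbf{Verification steps.} These are carried out in order:
\begin{enumerate}[leftmargin=*]
\item Compute $d(X)$ for all $2^7$ subsets.
\item Find $\lambda(x,y)$ for all 21 pairs, and check that the Gomory--Hu tree is unique. For this, each tree edge's minimum cut must be unique and the cut-equivalent tree must be forced; I would ensure that all $\lambda(x,y)$ values are distinct along the tree, with uniquely attaining sets.
\item Enumerate all $301$ 3-partitions and confirm a unique minimizer.
\item List the $12$ sides of $T$ and check that none equals $A$, $B$ or $C$.
\item Compute $F(T)$ as the minimum over those sides of $d(X)+\lambda(G[V\setminus X])$, and observe that $F(T)>\OPT$. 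Uniqueness of the optimum plus step~4 already imply the strict inequality, since every candidate $X$ in $F(T)$ yields a 3-partition with part $X$, which differs from the unique optimum.
\end{enumerate}

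\textbf{Main obstacle.} The hard step is finding weights that simultaneously make the optimum unique and hard and make the Gomory--Hu tree unique, while its sides avoid the parts. Uniqueness of the tree is the delicate part, because ties in $\lambda(x,y)$ produce alternative trees. I would first search over integer weights with a computer and then perturb the weights to break ties. The final check is routine but tedious, so I would present the weights and tree explicitly and delegate the exhaustive check to the Lean certificate.
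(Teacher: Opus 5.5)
Your proposal has a genuine gap: the theorem is an existence statement whose entire content is the instance, and you never exhibit one. Saying you ``would search small integer weights, possibly by computer'' and would ``delegate the exhaustive check to the Lean certificate'' describes a plan, not a proof. There is no certificate for a graph that has not been written down, and nothing in your argument shows that the search succeeds. Success is not automatic. In the paper's search, the linear program over caterpillar trees has margin $0$ for $|B|=1$, for $|B|=2$ with $|C|\in\{1,2\}$, and for $(|B|,|C|)=(3,1)$; it is first positive at $(3,2)$.

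What you do have is sound. Step~5 is correct: each finite term of $F(T)$ is the weight of a 3-partition having a side of $T$ as a part, so a unique optimum none of whose parts is a side forces $F(T)>\OPT$. The criterion of step~2 also suffices for uniqueness of $T$, though you assert it without proof. Here is why. Let $T'$ be another Gomory--Hu tree, and let $(a,b)$ be an edge of $T'$ with fundamental cut $F'$. Then $F'$ separates the endpoints of some edge $e$ on the $T$-path from $a$ to $b$, while $d(F')=\lambda(a,b)$ is the least tree weight on that path. So $e$ is the unique lightest edge there, $F'$ is a minimum cut for $e$, and hence $F'=F_e$ up to complement. Thus $T$ and $T'$ induce the same bipartitions of $V$. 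A tree on $V$ is determined by these, since $x,y$ are adjacent exactly when a single one of them separates $x$ from $y$. (The paper instead checks every spanning tree.)

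Your design heuristics do not close the gap, and one points the wrong way. Suppose ``heavy internal edges'' means that $\lambda(x,y)$ for $x,y$ in a common part exceeds $\lambda(x',y')$ for $x',y'$ in different parts. Any Gomory--Hu tree is a maximum spanning tree for $\lambda(\cdot,\cdot)$, so it restricts to a spanning tree of each part and has just two edges between parts. Each of those two edges has a whole part as a side, which is exactly what the theorem must avoid. In the paper's instance every internal edge is lighter than $\lambda$, and $B$ and $C$ hang off the spine of $T$ as leaves.

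Your reading of Lemma~\ref{lem:merge-gh} is also garbled. For a unique optimum, it says that a side $X$ whose tree-edge endpoint lies in $A$ satisfies $X\subseteq A$, $X\supseteq B$ or $X\supseteq C$; it does not say that sides inside $A$ are parts.

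The missing idea is the paper's way of finding the instance: fix the tree first, as a caterpillar whose spine is $A$, with $B$ and $C$ as the leaves at its two ends. For a fixed tree the Gomory--Hu conditions are linear inequalities in the edge weights. A linear program can therefore maximize the margin by which every 3-partition having a side of $T$ as a part loses to $(A,B,C)$. A positive margin, followed by a random $5\%$ perturbation to remove ties, yields explicit weights. Only then does your exhaustive verification have something to verify.
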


\paragraph{The instance.}
Vertices $s,u_0,\ell_0,\ell_1,\ell_2,c_0,c_1$ and edge weights:
\begin{center}
\begin{tabular}{lr@{\qquad}lr}
\toprule
Edge & Weight & Edge & Weight\\
\midrule
$c_0c_1$ & 0.337164 & $\ell_2\ell_0$ & 0.206713\\
$c_0\ell_0$ & 0.166989 & $sc_0$ & 0.035930\\
$c_1\ell_0$ & 0.035652 & $s\ell_1$ & 0.210285\\
$\ell_1\ell_0$ & 0.221789 & $s\ell_2$ & 0.111054\\
$\ell_1\ell_2$ & 0.194170 & $su_0$ & 0.364863\\
$\ell_2c_0$ & 0.110311 & $u_0c_0$ & 0.035173\\
 & & $u_0c_1$ & 0.327062\\
\bottomrule
\end{tabular}
\end{center}

\paragraph{Verification.}
All quantities below are for the weights exactly as printed. They were computed by exhaustive
enumeration in exact integer arithmetic on the weights $10^6w$: cut values over all 126 nonempty
proper vertex subsets, the optimum over all 3-partitions, and the Gomory--Hu tree over all
$7^5=16{,}807$ labelled spanning trees. Every item has also been verified in the Lean~4 proof
assistant, by kernel evaluation of the corresponding finite statements (no compiled code is trusted);
there, uniqueness of the tree is checked over all $7^6$ parent arrays rooted at $s$, which include
every spanning tree. Both verifications are provided as supplementary material.
\begin{itemize}[leftmargin=*]
\item \emph{The tree.} $T$ is the caterpillar with edges $s\ell_0$ (0.631143), $s\ell_1$ (0.626244),
  $s\ell_2$ (0.622248), $su_0$ (0.713745), $u_0c_0$ (0.685567) and $u_0c_1$ (0.699878). It is the only
  spanning tree with the cut-tree property, and the minimum-weight cut of each of its edges is unique.
\item \emph{The optimum.} It is unique: $(\{s,u_0\},\{\ell_0,\ell_1,\ell_2\},\{c_0,c_1\})$, with
  $\OPT=1.032456$.
\item \emph{Hardness.} $\lambda=0.622248$ and the pair weights are $0.312952$, $0.398165$ and
  $0.321339$, so the optimum is hard and greedy is not exact.
\item \emph{The gap.} The sides of $T$ are the five single leaves, their complements,
  $\{s,\ell_0,\ell_1,\ell_2\}$ and its complement $\{u_0,c_0,c_1\}$. Completing each optimally gives
  $F(T)=1.035084>\OPT$.
\end{itemize}
The instance is a random $5\%$ perturbation of a vertex of a linear program, so the failure is not an
artefact of ties.

\paragraph{How the instance was found.}
By Lemma~\ref{lem:merge-gh}, $F(T)$ can exceed $\OPT$ only in a graph none of whose optima admits a
merge of a Gomory--Hu side with at least two vertices. This restricts how the parts can sit in $T$.
Guided by it, we searched caterpillar trees: a spine of vertices of a part $A$, with a part $B$ as the
leaves at one end of the spine and a part $C$ as the leaves at the other. For a fixed tree shape, the condition that $T$ be a
Gomory--Hu tree is linear in the edge weights: every cut separating the endpoints of a tree edge weighs
at least that edge's fundamental cut. A linear program then maximizes the margin by which every
3-partition with a Gomory--Hu-side part loses to $(A,B,C)$. The margin is $0$ for $|B|=1$, for
$|B|=2$ with $|C|\in\{1,2\}$, and for $(|B|,|C|)=(3,1)$; it is positive for $(|B|,|C|)=(3,2)$, which is
the instance above.

\section{Minimum $k$-cut}\label{sec:kcut}

For $k\ge2$, $W(P)$ denotes the weight of the edges crossing a $k$-partition $P$, and $\OPT_k$ is its
minimum. A partition $P$ is \emph{compatible} with a cut $S$ if each of its parts lies inside $S$ or
inside $\Sbar=V\setminus S$. For a global minimum-weight cut $S$ let
\[
  R_k(S) \;=\; \lambda + \min_{1\le j\le k-1}\Bigl(\OPT_j\bigl(G[S]\bigr)+\OPT_{k-j}\bigl(G[\Sbar]\bigr)\Bigr),
\]
with $\OPT_1=0$ and $\OPT_j(H)=\infty$ if $H$ has fewer than $j$ vertices. This is the weight of a
cheapest $k$-partition compatible with $S$, and $R_3(S)=\greedy(S)$.

Compatibility is stronger than Xiao's notion of a cut being \emph{non-crossing} with a
$k$-partition~\cite[Definition~2]{Xiao2008}, which asks only that one side of the cut lie inside a
part. Xiao showed that every global minimum-weight cut is non-crossing with some minimum-weight
3-partition, and that this fails for $k\ge4$~\cite[Lemmas~5 and~9]{Xiao2008}. Compatibility can fail
already for $k=3$: since $R_3(S)=\greedy(S)$, it fails exactly when greedy is not exact, as for the
graph of Theorem~\ref{thm:cex}. A small example for $k=4$: take a 4-cycle with unit weights and a
pendant vertex $p$ attached by an edge of weight $2$. Then $\lambda=2$, attained by $\{p\}$, and
$\OPT_4=4$, by cutting the four cycle edges. But a 4-partition compatible with $\{p\}$ has $\{p\}$ as a
part and cuts the cycle into three pieces, so $R_4(\{p\})=5>\OPT_4$.

\begin{theorem}\label{thm:kcut}
Let $k\ge3$, let $P$ be a $k$-partition with parts $A_i\ne A_j$, and let $S$ be a global minimum-weight
cut. Then
\[
  R_k(S)\;\le\;\max\bigl(W(P)-w(A_i,A_j)+\lambda,\;W(P)\bigr).
\]
In particular, if some minimum-weight $k$-partition has two parts joined by weight at least $\lambda$,
then $R_k(S)=\OPT_k$ for every global minimum-weight cut $S$; that is, every global minimum-weight cut
is compatible with some minimum-weight $k$-partition.
\end{theorem}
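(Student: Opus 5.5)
The plan is to work with the coarser $(k-1)$-partition obtained by merging the two designated parts, and then refine it by $S$. Let $Q$ be the $(k-1)$-partition obtained from $P$ by replacing $A_i$ and $A_j$ with $U=A_i\cup A_j$. The edges crossing $Q$ are exactly the edges crossing $P$ other than those between $A_i$ and $A_j$, so $W(Q)=W(P)-w(A_i,A_j)$. Since $R_k(S)$ is the weight of a cheapest $k$-partition compatible with $S$, it suffices to exhibit one compatible $k$-partition of weight at most the stated maximum. There are two cases, according to whether $S$ splits some part of $Q$.

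\emph{Case 1: $S$ splits no part of $Q$.} Then every part of $Q$ lies inside $S$ or inside $\Sbar$. In particular $U$ lies on one side, so $A_i$ and $A_j$ do too. Hence $P$ itself is compatible with $S$, and $R_k(S)\le W(P)$.

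\emph{Case 2: $S$ splits some part of $Q$.}
\begin{enumerate}[label=(\alph*),leftmargin=*]
\item Form the common refinement $Q'$, whose parts are the nonempty sets $X\cap S$ and $X\cap\Sbar$ for $X$ a part of $Q$.
\item $Q'$ is compatible with $S$ by construction.
\item An edge crossing $Q'$ crosses either $Q$ or the cut $S$. Hence $W(Q')\le W(Q)+d(S)=W(P)-w(A_i,A_j)+\lambda$.
\item Because at least one part of $Q$ is split, $Q'$ has $m\ge k$ parts.
\end{enumerate}
If $m>k$, reduce the number of parts by repeatedly merging two parts of $Q'$ that lie on the same side of $S$. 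Such a pair exists whenever $m\ge3$, since one of the two sides then contains at least two parts. Each merge keeps the partition compatible with $S$ and keeps every part nonempty. It also does not increase the crossing weight, because it only removes the edges between the two merged parts from the cut. After $m-k$ merges we have a compatible $k$-partition of weight at most $W(P)-w(A_i,A_j)+\lambda$.

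Combining the two cases gives $R_k(S)\le\max\bigl(W(P)-w(A_i,A_j)+\lambda,\,W(P)\bigr)$.

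For the second statement, take $P$ to be a minimum-weight $k$-partition with $w(A_i,A_j)\ge\lambda$. Then the maximum equals $W(P)=\OPT_k$. Since $R_k(S)$ is the weight of a genuine $k$-partition, $R_k(S)\ge\OPT_k$, and hence $R_k(S)=\OPT_k$. A cheapest compatible $k$-partition is therefore optimal, so $S$ is compatible with some minimum-weight $k$-partition.

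The only delicate point is the part count in Case~2. One must check that the refinement has at least $k$ parts, which holds because $Q$ has $k-1$ parts and at least one of them is split. One must also check that merging can always be done within a side without emptying anything; here $k\ge3$ guarantees $m\ge3$ whenever merging is still needed. Theorem~\ref{thm:main} follows by taking $k=3$ and the pair $(B,C)$, since $W(A,B,C)-w_{BC}=d(A)$ and $R_3(S)=\greedy(S)$.
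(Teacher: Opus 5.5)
Your proposal is correct and follows the paper's proof essentially step for step. You merge $A_i$ and $A_j$ into a $(k-1)$-partition of weight $W(P)-w(A_i,A_j)$ and refine it by $S$ at extra cost at most $\lambda$; when $S$ splits a part you coarsen within sides down to $k$ parts, and otherwise you note that $P$ itself is compatible with $S$. Your explicit part count and pigeonhole argument for the within-side merges only spell out what the paper asserts in one line.
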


\begin{proof}
Merge $A_i$ and $A_j$ to obtain a $(k-1)$-partition $P'$ with $W(P')=W(P)-w(A_i,A_j)$, and refine $P'$
by $S$, replacing each part $X$ by its nonempty pieces among $X\cap S$ and $X\cap\Sbar$. Every edge
crossing the refinement crosses $P'$ or $S$, so the refinement weighs at most $W(P')+\lambda$.

If $S$ splits some part of $P'$, the refinement has at least $k$ pieces. Merging pieces that lie on
the same side of $S$ keeps every piece inside $S$ or $\Sbar$, does not increase the weight, and can
bring the number of pieces down to exactly $k$. This gives a compatible $k$-partition of weight at most
$W(P)-w(A_i,A_j)+\lambda$.

Otherwise every part of $P'$ lies inside $S$ or $\Sbar$. In particular $A_i\cup A_j$ does, so $P$
itself is compatible with $S$, and $R_k(S)\le W(P)$. The second statement follows by taking $P$ optimal
with $w(A_i,A_j)\ge\lambda$.
\end{proof}

Refining a partition by a cut and coarsening the result is a standard device in the analysis of
$k$-cuts; see, for instance, Lemma~5 of Vaughn~\cite{Vaughn2026} and the uncrossing arguments of
Kamidoi, Yoshida and Nagamochi~\cite{KamidoiYoshidaNagamochi2007} and Xiao~\cite{Xiao2008kway}.
What Theorem~\ref{thm:kcut} adds is the observation that a single pair of parts joined by weight at
least $\lambda$ pays for the refinement.

For $k=3$ and the pair $(B,C)$, $W(P)-w_{BC}+\lambda=\lambda+d(A)$, which is
Theorem~\ref{thm:main}.

\begin{corollary}\label{cor:kapprox}
For every $k\ge3$ and every global minimum-weight cut $S$,
\[
  R_k(S)\;\le\;\Bigl(1+\frac{2(k-2)}{k(k-1)}\Bigr)\OPT_k .
\]
\end{corollary}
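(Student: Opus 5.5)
We apply Theorem~\ref{thm:kcut} with $P$ a minimum-weight $k$-partition ($W(P)=\OPT_k$) and with $(A_i,A_j)$ chosen as the heaviest pair of parts. This mirrors the proof of Corollary~\ref{cor:43}. The theorem then gives
\[
  R_k(S)\le \OPT_k+\max\bigl(0,\;\lambda-w(A_i,A_j)\bigr),
\]
so the task reduces to bounding $\lambda-w(A_i,A_j)$ by $\frac{2(k-2)}{k(k-1)}\OPT_k$.

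For the heaviest pair, averaging over the $\binom{k}{2}$ pairs gives
\[
  w(A_i,A_j)\ge \frac{\OPT_k}{\binom k2}=\frac{2\,\OPT_k}{k(k-1)}.
\]
For $\lambda$, we use $\lambda\le d(X)$ for every part $X$. Averaging over the $k$ parts gives $\lambda\le \frac1k\sum_X d(X)=\frac{2\,\OPT_k}{k}$. Combining the two bounds,
\[
  \lambda-w(A_i,A_j)\le \frac{2\,\OPT_k}{k}-\frac{2\,\OPT_k}{k(k-1)}=\frac{2(k-2)}{k(k-1)}\,\OPT_k .
\]
This is exactly the claimed factor, and for $k=3$ it recovers $4/3$.

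The averaging steps are routine. The only point to check is that the optimal $P$ has $k\ge3$ nonempty parts, so that both averages are legitimate. It also matters that the two averages are taken independently: the heaviest pair need not involve the lightest part, but the bound on $\lambda$ uses only the average cut weight, so no coupling is needed.

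If one wanted a sharper bound, the obstacle would be exploiting that coupling: $\lambda\le\min_X d(X)$ together with the pair structure. For the stated factor, though, the independent estimates suffice, and I expect no real difficulty beyond the arithmetic.
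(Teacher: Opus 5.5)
Your proposal is correct and follows the paper's proof: apply Theorem~\ref{thm:kcut} to an optimal $P$ with its heaviest pair, and combine $w(A_i,A_j)\ge \OPT_k/\binom k2$ with $\lambda\le \min_i d(A_i)\le 2\,\OPT_k/k$. The max with $0$ is harmless because the resulting bound $\frac{2(k-2)}{k(k-1)}\OPT_k$ is nonnegative for $k\ge3$.
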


\begin{proof}
Let $P$ be optimal and $(A_i,A_j)$ its heaviest pair, so $w(A_i,A_j)\ge\OPT_k/\binom k2$. Also
$\lambda\le\min_i d(A_i)\le 2\,\OPT_k/k$, because $\sum_i d(A_i)=2\,\OPT_k$. Theorem~\ref{thm:kcut} gives
$R_k(S)\le\OPT_k+\max\bigl(0,\,2\OPT_k/k-2\OPT_k/(k(k-1))\bigr)$.
\end{proof}

The factor is $4/3$ for $k=3$ and $k=4$, $13/10$ for $k=5$, and tends to $1$. It concerns the best
compatible partition rather than an algorithm, since $R_k$ requires exact solutions on the two sides.

\begin{corollary}\label{cor:khard}
If $R_k(S)>\OPT_k$ for some global minimum-weight cut $S$, then every minimum-weight $k$-partition has
all $\binom k2$ pair weights below $\lambda$, and $k\lambda/2\le\OPT_k<\binom k2\lambda$.
\end{corollary}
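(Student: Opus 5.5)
The plan is to argue by contraposition from Theorem~\ref{thm:kcut} and then bound $\OPT_k$ directly. Suppose $R_k(S)>\OPT_k$ for some global minimum-weight cut $S$, and let $P$ be any minimum-weight $k$-partition. If some pair of parts $A_i,A_j$ of $P$ had $w(A_i,A_j)\ge\lambda$, then $W(P)-w(A_i,A_j)+\lambda\le W(P)=\OPT_k$. Theorem~\ref{thm:kcut} would then give $R_k(S)\le\OPT_k$, a contradiction. Hence all $\binom k2$ pair weights of every optimum are below $\lambda$. This uses only the first statement of Theorem~\ref{thm:kcut}, applied to each pair in turn; alternatively one can cite its ``in particular'' clause directly.

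Next I derive the two bounds on $\OPT_k$, fixing any optimum $P$ with parts $A_1,\dots,A_k$. For the upper bound, $\OPT_k=W(P)=\sum_{i<j}w(A_i,A_j)$. This is a sum of $\binom k2$ terms, each strictly below $\lambda$, so $\OPT_k<\binom k2\lambda$. The sum is nonempty since $k\ge3$. For the lower bound, each part $A_i$ is a nonempty proper subset of $V$, because there are at least two parts. So $d(A_i)\ge\lambda$, and summing gives $2\,\OPT_k=\sum_i d(A_i)\ge k\lambda$, as used in Corollary~\ref{cor:kapprox}. The identity $\sum_i d(A_i)=2W(P)$ holds because each crossing edge is counted once from each of its two endpoint parts.

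There is no real obstacle here; the argument mirrors Proposition~\ref{prop:hard}. The only point needing care is that the hypothesis concerns one cut $S$, while the conclusion concerns every optimum. The contrapositive handles this, since Theorem~\ref{thm:kcut} holds for every $k$-partition $P$ and every global minimum-weight cut $S$. We also need $\OPT_k$ finite, i.e.\ $n\ge k$, so that an optimum exists; the inequalities are otherwise vacuous.
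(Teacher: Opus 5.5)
Your proposal is correct and follows essentially the same route as the paper: the first claim by contraposition from Theorem~\ref{thm:kcut}, then $2\,\OPT_k=\sum_i d(A_i)\ge k\lambda$ and $\OPT_k$ as a sum of $\binom k2$ pair weights each below $\lambda$. Your added remarks, that each part is a nonempty proper subset and that an optimum exists, are harmless extra care beyond what the paper writes out.
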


\begin{proof}
The first claim is the contrapositive of Theorem~\ref{thm:kcut}. Then
$\OPT_k=\tfrac12\sum_i d(A_i)\ge k\lambda/2$, and $\OPT_k$, a sum of $\binom k2$ pair weights each below
$\lambda$, is below $\binom k2\lambda$.
\end{proof}

In the 5-vertex example above, the pair weights of the optimum are $1$ or $0$, all below $\lambda=2$,
as Corollary~\ref{cor:khard} requires.

\section{Computational checks}\label{sec:comp}

The statements above were tested against brute-force optima on random graphs: planar graphs
(Delaunay triangulations of random points, each edge kept with probability $0.8$) and $G(n,0.45)$
graphs with 5--11 vertices, with weights uniform in $[1,10]$ or uniform integers $1,\dots,5$; and, for
$F(T)$, near-complete, circulant, wheel and three-cluster graphs with 7--10 vertices. The adversarial
search hill-climbs on the weights of random graphs with 11--13 vertices to maximize $F(T)$ minus the
least weight of a 3-partition with no Gomory--Hu side as a part.
All quantities were computed by exhaustive enumeration over vertex subsets and partitions, and every
Gomory--Hu tree was rebuilt from exhaustively computed minimum-weight cuts and re-verified. Each row
is reproduced by one of the supplementary scripts.
\begin{center}
\begin{tabular}{p{0.6\textwidth}rr}
\toprule
Statement & Instances & Violations\\
\midrule
Theorem~\ref{thm:main}, every 3-partition and global minimum-weight cut & 1{,}600 graphs & 0\\
Corollary~\ref{cor:43} ($4/3$ bound), at 1{,}800 global minimum-weight cuts & 1{,}600 graphs & 0\\
Corollary~\ref{cor:hs}: if greedy is inexact, every optimum is hard & 121 graphs & 0\\
Proposition~\ref{prop:hard}, every hard optimum & 429 optima & 0\\
Lemma~\ref{lem:merge-gh}, non-trivial merges in 20{,}000 graphs & 49 merges & 0\\
$F(T)=\OPT$, random graphs & 800 graphs & 0\\
$F(T)=\OPT$, structured graphs & 1{,}577 graphs & 0\\
$F(T)=\OPT$, adversarial weight search (hill climbing) & 240 climbs & 0\\
$F(T)=\OPT$, perturbed optima of the caterpillar program & 72 graphs & 4\\
Theorem~\ref{thm:kcut} condition implies $R_k=\OPT_k$, $k=3,4,5$ & 685 graphs & 0\\
Theorem~\ref{thm:kcut} bound and Corollary~\ref{cor:kapprox}, $k=3,4,5$ & 963 cuts & 0\\
Weaker condition (pairs other than one part), $k=4,5$ & 600 graphs & 24\\
\bottomrule
\end{tabular}
\end{center}
Violations of $F(T)=\OPT$ are thus rare and structured: none appeared in random or locally optimized
instances, only in instances built from the caterpillar program of Section~\ref{sec:cex}. The last
row is expected to fail: it shows that the hypothesis of Theorem~\ref{thm:kcut} cannot be weakened
to ``the parts other than one are joined among themselves by weight at least $\lambda$''.

\paragraph{A numerical caveat.}
Gomory--Hu trees must be built with care in floating point. In early experiments a widely used library
routine for minimum $s$--$t$ cuts occasionally returned a source side inconsistent with the flow value,
because residual capacities of order $10^{-15}$ were treated as usable; the resulting trees were
flow-equivalent but not cut-equivalent, and an adversarial search exploited them to produce false
counterexamples. The supplementary scripts therefore avoid max-flow routines altogether and verify
every tree they use.

\section{Open questions}

\begin{enumerate}[leftmargin=*]
\item \emph{A small candidate family for hard instances.} Nagamochi and Ibaraki~\cite{NagamochiIbaraki2000}
  prove that some optimum has a part among $O(n)$ cuts listed in nondecreasing order up to the first
  crossing pair. Can an equally small family be read off a Gomory--Hu tree alone? Theorem~\ref{thm:cex}
  rules out its sides. Also trying as a part, for every tree vertex, the set of its leaf neighbours
  closes the gap in one of the four failures of Section~\ref{sec:comp} but not in the other three.
\item \emph{Hard instances in near-linear time.} Outside the hard class, weighted minimum 3-cut is
  near-linear on planar and bounded-genus graphs (Corollary~\ref{cor:time}); in general graphs Vaughn's
  $\tilde O(n^2)$ is conditionally optimal. In a plane graph $G$ with dual $H$,
  $\OPT(G)=\min(\greedy,\mathrm{MinTheta}(H))$, where $\mathrm{MinTheta}$ is the least weight of a
  theta subgraph, since every plane theta is 3-separating and non-theta optima are found by greedy
  (Corollary~\ref{cor:hs}(a)). So the question asks for a theta-analogue of the near-linear planar
  girth algorithms. Can the minimum light theta of a plane graph, one whose three strands are each
  lighter than the girth of $H$, be found in $o(n^2)$ time?
\item \emph{Larger $k$, algorithmically.} Theorem~\ref{thm:kcut} settles when a global minimum-weight
  cut is compatible with an optimal $k$-partition, but exploiting this needs exact $j$-cuts of the two
  sides, for which the condition need not hold again. Is there a condition under which the whole
  recursion, with a global minimum-weight cut at every level, is exact?
\item \emph{Hochbaum--Tsai.} Their manuscript~\cite{HochbaumTsai1983} proving the $4/3$ bound is cited
  in~\cite{HochbaumShmoys1985} but appears unavailable; it may contain parts of Theorem~\ref{thm:main}.
\end{enumerate}

\section*{Supplementary material}
The supplementary material (also the ancillary files of the arXiv version, and the public repository
\url{https://github.com/armandorodado/minimum-3cut}) contains:
(i) a Lean~4 proof of every claim of Theorem~\ref{thm:cex}, checked by kernel evaluation with no
additional axioms, together with a script that checks any 7-vertex instance for the properties of
Theorem~\ref{thm:cex} and generates the corresponding Lean proof;
(ii) a Lean~4 formalization, for all $n$, of Theorem~\ref{thm:main}, Corollaries~\ref{cor:hs}
and~\ref{cor:43}, Proposition~\ref{prop:hard}, Lemmas~\ref{lem:merge-cut} and~\ref{lem:merge-gh}, and
the 5-vertex example of Section~\ref{sec:kcut}. It uses natural-number edge weights, which covers
rational weights by scaling, needs no symmetry of the weights, and uses only Lean's standard axioms;
and
(iii) self-contained Python scripts that reproduce every row of the table in Section~\ref{sec:comp},
the linear program of Section~\ref{sec:cex}, an independent exhaustive check of
Theorem~\ref{thm:cex}, and the example of Section~\ref{sec:kcut}.

\section*{Data availability}
No data were used. The code that supports the findings of this article is provided as supplementary
material and at \url{https://github.com/armandorodado/minimum-3cut}.

\section*{Declaration of competing interest}
The author declares that he has no known competing financial interests or personal relationships
that could have appeared to influence the work reported in this paper.

\section*{Declaration of generative AI and AI-assisted technologies in the manuscript preparation process}
During the preparation of this work the author used Claude (Anthropic) to assist with developing and
checking proofs, writing and running the verification code and the Lean formalizations, searching the
literature, and drafting the text. After using this tool, the author reviewed and edited the content
as needed and takes full responsibility for the content of the published article.

\bibliographystyle{elsarticle-num}
\bibliography{greedy3cut}

\end{document}